\DeclareMathOperator{\argmin}{arg\,min}
\DeclareMathOperator{\dist}{dist}
\DeclareMathOperator{\LCA}{LCA}
\newcommand{\ket}[1]{\left| #1 \right\rangle}
\begin{document}

\title{Fast Classical and Quantum Algorithms for Online $k$-server Problem on Trees
\thanks{
This work on Quantum algorithm for k-server problem was supported by Russian Science Foundation Grant 19-19-00656.
The research in Section 4 is funded by the subsidy allocated to Kazan Federal University for the state assignment in the sphere of scientific activities, project No. 0671-2020-0065. The research is also supported in part by the ERA-NET Cofund in Quantum Technologies project QuantAlgo and the French ANR Blanc project RDAM.}}
%
%
\author{Ruslan Kapralov\inst{1} \and
Kamil Khadiev\inst{1,2} \and
Joshua Mokut\inst{1} \and
Yixin Shen\inst{3} \and
Maxim Yagafarov\inst{1}}
\authorrunning{R. Kapralov, et al.}
%
\institute{Kazan Federal University, Kazan, Russia\and Zavoisky Physical-Technical Institute, FRC Kazan Scientific Center of RAS, Kazan, Russia \and
Universit\'{e} de Paris, CNRS, IRIF, F-75006 Paris, France}
\maketitle              
\begin{abstract}
We consider online algorithms for the $k$-server problem on trees. 
Chrobak and Larmore proposed a $k$-competitive algorithm for this problem that has the optimal competitive ratio. However, a naive implementation of their algorithm has time complexity $O(n)$
to process each request, where $n$ is the number of nodes. We propose a new time-efficient implementation of this algorithm that has $O(n\log n)$ time complexity for preprocessing and $O\left(k^2 + k\cdot \log n\right)$ time for processing a request. We also propose a quantum algorithm for the case where the nodes of the tree are presented using string paths. In this case, no preprocessing is needed, and the time complexity for each request is $O(k^2\sqrt{n}\log n)$. When the number of requests is $o\left(\frac{\sqrt{n}}{k^2}\right)$, we obtain a quantum speed-up on the total runtime compared to our classical algorithm.

We also give a simple quantum algorithm to find the first marked element in a collection
of $m$ objects, that works even in the presence of two-sided bounded errors on the input oracle. It has worst-case query complexity $O(\sqrt{m})$. In the particular case of one-sided errors on the input, it has expected query complexity $O(\sqrt{x})$ where $x$ is the position of the first marked element. Compared with previous work, our algorithm can handle errors in the input oracle.

\keywords{online algorithms\and k-server problem on trees \and quantum computing \and binary search.}
\end{abstract}
\section{Introduction}
\label{sec:intro}
Online optimization is a field of optimization theory that deals with optimization problems having no knowledge of the future \cite{k2016}. An online algorithm reads an input piece by piece and returns an answer piece by piece immediately, even if the answer can depend on future pieces of the input. The goal is to return an answer that minimizes an objective function (the cost of the output). The most standard method to define the effectiveness of an online algorithm is the competitive ratio \cite{kmrs86}.
The competitive ratio is the approximation ratio achieved by the algorithm. That is the worst-case ratio between the cost of the solution found by the algorithm and the cost of an optimal solution. If the ratio is $c$, then the online algorithm is called $c$-competitive. %
In the general setting, online algorithms have unlimited computational power. Nevertheless, many papers consider them with different restrictions. Some of them are restrictions on memory 
\cite{bk2009,blm2015,kkm2018,aakv2018,gs93,h95,kk2019disj,k2021,kl2020,kk2019online2w,kk2019},
others are restrictions on time complexity \cite{fnn2006,rbm2013}.

In this paper, we focus on efficient online algorithms in terms of time complexity.
We consider the $k$-server problem on trees.
 Chrobak and Larmore \cite{cl91} proposed a $k$-competitive algorithm for this problem where the competitive ratio $k$ is the best possible for deterministic algorithms for this problem. The existing implementation of their algorithm has $O(n)$ time complexity for each request, where $n$ is the number of nodes in the tree. For general graphs, there exists a time-efficient algorithm for the $k$-server problem \cite{rbm2013} that uses min-cost-max-flow algorithms. However, in the special case of a tree, this algorithm is not optimal. We propose a new time-efficient implementation of the algorithm from \cite{cl91}. It has $O\left(n\log n\right)$ time complexity for preprocessing and $O\left(k^2+k\log n \right)$ for processing a request. It is based on fast algorithms for computing Lowest Common Ancestor (LCA) \cite{bfc2000} and the binary lifting technique \cite{bf2004}. 


We also revisit the problem of finding the first marked element in
a collection of $m$ objects. It is well-known that it can be solved
in expected $O(\sqrt{m})$ queries when given quantum oracle access to the input,
and expected $O(\sqrt{x})$ queires where $x$ is the position
of the first marked element \cite{dh96},\cite[Theorem~10]{ll2015} and \cite[Theorem~6]{k2014}.
We give a $O(\sqrt{m})$ queries algorithm that works
even in the presence of \emph{two-sided bounded errors} in the input. We also
provide an expected $O(\sqrt{x})$ queries algorithm in the case where the
input has \emph{one-sided errors} only. Note that contrary to the previous results,
we assume that the oracle can have errors whereas \cite{dh96,ll2015,k2014} assume
that the oracle is perfect.

We then consider the $k$-server problem in the case where the description of the tree is given by string paths. The string path of a node in a rooted tree is a sequence of length $h$, where $h$ is the height of the node, describing the path from the root to the node. Such a way of representing the trees is useful, for example, as a path to a file in file systems.
Assuming oracle access to the length of the string path of each node and to each element on the string path with time complexity $O(1)$, we obtain a quantum algorithm for this problem with $O(k^2\sqrt{n}\log(n))$ running time to process each request and without preprocessing. This algorithm is based on our improved quantum search algorithm. When the number of requests is $o\left(\frac{\sqrt{n}}{k^2}\right)$,
the total runtime of our quantum algorithm is smaller than the classical one. 

The structure of the paper is the following. Section \ref{sec:prelims} contains preliminaries. The classical algorithm is described in Section \ref{sec:c-algorithm}. Section \ref{sec:binsearch} contains our improved 
quantum search algorithm. The quantum algorithm for the $k$-server problem is described in Section \ref{sec:q-algorithm}. 

\section{Preliminaries}
\label{sec:prelims}
\subsection{Online algorithms}
{\bf An online minimization problem} consists of a set $\cal{I}$ of inputs and a cost function. Each input $I = (x_1, \dots , x_n)$ is a sequence of requests, where $n$ is the length of the input $|I|=n$. Furthermore, a set of feasible outputs (or solutions) $\widetilde{O}(I)$ is associated with each $I$; an output is a sequence of answers $O = (y_1, \dots, y_n)$. The cost function assigns a positive real value $cost(I, O)$ to $I\in{ \cal I}$ and $O\in\widetilde{O}(I)$. An optimal  solution for $I\in{\cal I}$ is $O_{opt}(I)=\argmin_{O\in\widetilde{O}(I)}cost(I,O)$.

Let us define an online algorithm for this problem.
{\bf A deterministic online algorithm}  $A$ computes the output sequence $A(I) = (y_1,\dots , y_n)$ such that $y_i$ is computed based on $x_1, \dots , x_i$.  
We say that $A$ is $c$-{\em competitive} if there exists a constant $\alpha\geq 0$ such that, for every $n$ and for any input $I$ of size $n$, we have: $cost(I,A(I)) \leq c \cdot cost(I,O_{Opt}(I)) + \alpha$. The minimal $c$ that satisfies the previous condition is called the {\bf competitive ratio} of $A$. 

\subsection{Rooted Trees}\label{sec:lca}\label{sec:bl}
Consider a rooted tree $G=(V,E)$, where $V$ is the set of nodes/vertices, and $E$ is the set of edges. Let $n=|V|$ be the number of nodes, or equivalently the size of the tree. We denote by $1$ the root of the tree.
A path $P$ is a sequence of nodes $(v_1,\dots,v_h)$ that are connected by edges, i.e. $(v_i,v_{i+1})\in E$ for all $i\in\{1,\dots,h-1\}$, such that there are no duplicates among $v_1,\dots,v_h$. Here $h$ is a length of the path. Between any two nodes $v$ and $u$ on the tree, there is a unique path. The distance $\dist(v,u)$ is the length of this path.
For each node $v$ we can define a parent node $\textsc{Parent}(v)$ which is the first node on the unique path from $v$ to root $1$.  We have $\dist\left(1,\textsc{Parent}(v)\right)+1=\dist(1,v)$. Additionally, we can define the set of children $\textsc{Children}(v)=\{u: \textsc{Parent}(u)=v\}$. Any node $y$ on the unique path from root $1$ to node $v$ is an ancestor of node $v$. 


{\bf Lowest Common Ancestor (LCA).}
Given two nodes $u$ and $v$ of a rooted tree, the Lowest Common Ancestor is the node $w$ such that $w$ is an ancestor of both $u$ and $v$, and $w$ is the closest one to $u$ and $v$ among all such ancestors. The following result is well-known.

\begin{lemma}[\cite{bfc2000}]\label{lm:lca}
There is an algorithm for the LCA problem with the following properties:
    (i) The time complexity of the preprocessing step is $O(n)$;
    (ii)The time complexity of computing LCA for two vertices is $O(1)$.
\end{lemma}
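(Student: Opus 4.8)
The plan is to reduce the LCA problem to the Range Minimum Query (RMQ) problem and then to solve the special instance of RMQ that arises in linear preprocessing time and constant query time. First I would compute an Euler tour of the rooted tree $G$ by a depth-first traversal, storing the sequence of visited vertices (of length $2n-1$, since every edge is crossed twice), the array $D$ of the corresponding depths $\dist(1,\cdot)$, and, for each vertex $v$, the index of its first occurrence in the tour; all of this takes $O(n)$ time. The key fact is that $\LCA(u,v)$ is precisely the shallowest vertex occurring in the tour between the first occurrences of $u$ and $v$, so every LCA query reduces in $O(1)$ extra time to an RMQ on a contiguous subarray of $D$.

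The second ingredient is the $\pm 1$ property of $D$: consecutive entries differ by exactly $1$, because consecutive vertices of an Euler tour are in a parent--child relation. To exploit this, I would split $D$ into consecutive blocks of length $\ell=\tfrac12\log n$. At the coarse level, I compute the minimum of each of the $O(n/\log n)$ blocks and build a sparse table on these block minima answering range-minimum queries over runs of whole blocks in $O(1)$ time; constructing it costs $O\bigl(\tfrac{n}{\log n}\log n\bigr)=O(n)$. At the fine level, the $\pm 1$ property implies that a block is determined, up to an additive shift, by its length-$(\ell-1)$ string of $\pm 1$ steps, so there are only $2^{\ell-1}=O(\sqrt n)$ distinct block ``shapes''; for each shape I precompute the answer to all $O(\ell^2)$ in-block RMQs, for a total of $O(\sqrt n\cdot\log^2 n)=O(n)$ time and space, and I label each actual block with its shape.

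To answer a query on a range $[i,j]$ of $D$, I would split it into a (possibly empty) prefix inside one block, a (possibly empty) suffix inside another, and a run of complete blocks in between; the minima of the two fragments come from the precomputed in-block tables, the minimum over the complete blocks comes from the sparse table, and I return the index achieving the smallest of these three, all in $O(1)$ time. Mapping this index back through the Euler tour yields $\LCA(u,v)$. The step requiring the most care is the size accounting for the fine level: it works only because $\ell$ is chosen to be $\Theta(\log n)$ with a small enough constant, so that $2^{\ell}$ is $O(\sqrt n)$ (polynomially bounded) rather than linear in $n$; this is exactly where the $\pm 1$ structure of the Euler-tour instance is essential, since a general RMQ does not admit $O(n)$ preprocessing with $O(1)$ queries by this direct counting argument.
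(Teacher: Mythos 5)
Your proposal is correct and is precisely the Euler-tour/$\pm 1$-RMQ argument of the cited reference \cite{bfc2000}, which the paper invokes without reproving. The block decomposition, the sparse table on block minima, and the $O(\sqrt{n})$ counting of block shapes are all exactly the ingredients of that construction, and your complexity accounting is right.
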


We call $\textsc{LCA\_Preprocessing}()$ the subroutine that does the preprocessing for the algorithm and $\textsc{LCA}(u,v)$ that computes the LCA of two nodes $u$ and $v$.


{\bf Binary Lifting Technique.}
This technique from \cite{bf2004} allows us to obtain a vertex $v'$ that is at distance $z$ from a vertex $v$ with $O(\log n)$ time complexity. There are two procedures:
    
    $\bullet\quad \textsc{BL\_Preprocessing}()$ prepares the required data structures. The time complexity is $O(n\log n)$.
    
    $\bullet\quad\textsc{MoveUp}(v,z)$ returns a vertex $v'$ on the path from $v$ to the root and at distance $\dist(v',v)=z$. 
    The time complexity is $O(\log n)$.

 The technique is well documented in the literature. We present an implementation in Appendix \ref{apx:bl} for completeness.

\subsection{$k$-server Problem on Trees}
Let $G=(V,E)$ be a rooted tree, and we are given $k$ servers that can move among nodes of $G$. At each time slot, a request $q\in V$
appears. We have to ``serve'' this request, that is, to choose one of the $k$ servers and
move it to $q$. The other servers are also allowed to move. The cost function is the distance by which we move the servers.
In other words, if before the request, the servers are at positions $v_1,\dots,v_k$ and after the request they are at $v'_1,\dots,v'_k$, then $q\in\{v'_1,\dots,v'_k\}$ and the cost of the move is $\sum_{i=1}^k \dist(v_i,v'_i)$. The cost of a sequence of requests is the sum of the costs of serving each requests. The problem is to design a strategy that minimizes the cost
of servicing a sequence of requests given online.

\subsection{Quantum query model}

We use the standard form of the quantum query model. 
Let $f:D\rightarrow \{0,1\},D\subseteq \{0,1\}^m$ be an $m$ variable function. We wish to compute on an input $x\in D$. We are given an oracle access to the input $x$, i.e. it is realized by a specific unitary transformation usually defined as $\ket{i}\ket{z}\ket{w}\rightarrow \ket{i}\ket{z+x_i\pmod{2}}\ket{w}$ where the $\ket{i}$ register indicates the index of the variable we are querying, $\ket{z}$ is the output register, and $\ket{w}$ is some auxiliary work-space. An algorithm in the query model consists of alternating applications of arbitrary unitaries independent of the input and the query unitary, and a measurement in the end. The smallest number of queries for an algorithm that outputs $f(x)$ with probability $\geq \frac{2}{3}$ on all $x$ is called the quantum query complexity of the function $f$ and is denoted by $Q(f)$. We refer the readers to \cite{nc2010,a2017,aazksw2019part1} for more details on quantum computing. 



\begin{definition}[Search problem]
Suppose we have a set of objects named $\{1,2,\dots, m\}$, of which some are targets. Suppose  $\mathcal{O}$ is an oracle that identifies the targets.
The goal of a search problem is to find a target $i \in \{1,2,\dots, m\}$ by making queries to the oracle $\mathcal{O}$.
\end{definition}

In search problems, one will try to minimize the number of queries to the oracle. In the classical setting, one needs $O(m)$ queries to solve such a problem. Grover, on the other hand, constructed a quantum algorithm that solves the search problem with only $O(\sqrt{m})$ queries~\cite{g96}, provided that there is a unique target.  When the number of targets is unknown, Brassard \emph{et al.} designed a modified Grover algorithm that solves the search problem with $O(\sqrt{m/\lambda})$ queries~\cite{bbht98}, where $\lambda$ is the number of targets, which is of the same order as the query complexity of the Grover search.

\section{A Fast Online Algorithm for $k$-server Problem on Trees with Preprocessing}\label{sec:c-algorithm}

We first describe Chrobak-Larmore's $k$-competitive algorithm for $k$-server problem on trees from \cite{cl91}.
Assume that we have a request on a vertex $q$, and the servers are on the vertices $v_1,\dots,v_k$.
We say that a server $i$ is {\em active} if there are no other servers on the path from $v_i$ to $q$.
 In each phase, we move every {\em active} server one step towards the vertex $q$. After each phase, the set of {\em active} servers can change.
 We repeat this phase (moving of the active servers) until one of the servers reaches the queried vertex $q$.

The naive implementation of this algorithm has time complexity $O(n)$ for each request. First, we run a depth-first search with time labels \cite{cormen2001}, whose result allows us to check in constant time whether a vertex $u$ is an ancestor of a vertex $v$.
Recall that time labels record the first timestamp $f(v)$
when the depth-first search enters a node $v$, and the last timestamp $\ell(v)$
when the depth-first search finishes processing the last child of
node $v$; the timestamp increases every time a new node is visited.
A node $u$ is then an ancestor of $v$ if and only if the interval
$[f(v),\ell(v)]$ is contained in $[f(u),\ell(u)]$.
After that, we can move each active server towards the queried vertex, step by step. Together all active servers cannot visit more than $O(n)$ vertices.

In the following, we present an effective implementation of Chrobak-Larmore's algorithm with preprocessing. The preprocessing part is done once and has $O(n\log n)$ time complexity (Theorem \ref{th:preproc}). The request processing part is done for each request and has $O\left(k^2 + k\cdot\log n\right)$ time complexity  (Theorem \ref{th:query-proc}).
 
\subsection{Preprocessing}
We do the following steps for the preprocessing 
:
    
    1. We do required preprocessing for LCA algorithm (Section \ref{sec:lca})
    
     2. We do required preprocessing for Binary lifting technique (Section \ref{sec:bl}) 
    
     3. Additionally, for each vertex $v$ we compute the distance from the root to $v$, i.e. $\dist(1,v)$. This can be done using a depth-first search algorithm \cite{cormen2001}. We store all the distances in an array. See Appendix \ref{apx:alg:dist} for an implementation of $\textsc{ComputeDistance}(u)$.



\begin{algorithm}[ht]
    \caption{$\textsc{Preprocessing}$. The preprocessing procedure.} \label{alg:preproc}
    \begin{algorithmic}
        \State $\textsc{LCA\_Preprocessing()}$
        \State $\textsc{BL\_Preprocessing()}$
        \State $\dist(1,1)\gets 0$
        \State $\textsc{ComputeDistance(1)}$
    \end{algorithmic}
\end{algorithm}

\begin{theorem}\label{th:preproc}
The preprocessing has time complexity $O(n\log n)$.
\end{theorem}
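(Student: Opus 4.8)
The plan is to simply add up the costs of the three preprocessing steps listed in Algorithm~\ref{alg:preproc}, since the preprocessing is a straight-line composition of these subroutines with no loops surrounding them.

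First I would invoke Lemma~\ref{lm:lca}(i), which tells us that $\textsc{LCA\_Preprocessing}()$ runs in $O(n)$ time. Next, by the stated properties of the binary lifting technique (Section~\ref{sec:bl}), $\textsc{BL\_Preprocessing}()$ runs in $O(n\log n)$ time. The remaining work is the initialization $\dist(1,1)\gets 0$, which is $O(1)$, and the call $\textsc{ComputeDistance}(1)$. For the latter I would argue that it is a single depth-first traversal of the tree: at each node $v$ we set $\dist(1,v)=\dist(1,\textsc{Parent}(v))+1$ in constant time and recurse on the children, so the total cost is $O(n)$ (this is the content of the implementation in Appendix~\ref{apx:alg:dist}, which may be cited). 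Summing, the total time is $O(n) + O(n\log n) + O(1) + O(n) = O(n\log n)$.

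There is essentially no hard part here; the only thing to be careful about is to justify that $\textsc{ComputeDistance}(1)$ indeed visits each node exactly once and does $O(1)$ work per node, so that it does not dominate the $O(n\log n)$ bound — but this is immediate from the DFS structure. I would therefore keep the proof to a few lines, citing Lemma~\ref{lm:lca}, the complexity bound for $\textsc{BL\_Preprocessing}()$ from Section~\ref{sec:bl}, and Appendix~\ref{apx:alg:dist} for the distance computation, and conclude that the dominant term $O(n\log n)$ comes from the binary lifting preprocessing.
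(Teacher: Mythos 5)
Your proposal is correct and follows exactly the paper's argument: sum the costs of the three preprocessing subroutines, with $O(n)$ for LCA preprocessing (Lemma~\ref{lm:lca}), $O(n\log n)$ for binary lifting, and $O(n)$ for the distance-computing DFS, so the binary lifting term dominates. The extra care you take to justify that $\textsc{ComputeDistance}(1)$ visits each node once is a welcome but minor elaboration of what the paper states directly.
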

\begin{proof}
The time complexity of the preprocessing phase is $O(n)$ for LCA, $O(n\log n)$ for the binary lifting technique and $O(n)$ for $\textsc{ComputeDistance}(1)$. Therefore, the total time complexity is $O(n\log n)$.
\end{proof}
\subsection{Request Processing}
Assume that we have a request on a vertex $q$, and the servers are on the vertices $v_1,\dots,v_k$. We do the following steps, implemented in Algorithms~\ref{alg:query}~and~\ref{alg:move}.

\noindent
 {\bf Step 1.} We sort all the servers by their distance to the node $q$. The distance $\dist(v,q)$ between a node $v$ and the node $q$ can be computed in the following way. Let $l=\textsc{LCA}(v,q)$ be the lowest common ancestor of $v$ and $q$, then $\dist(v,q)=\dist(1,q)+\dist(1,v)-2\cdot \dist(1,l)$. Using the prepocessing, this quantity can be computed in constant time.
 We denote by $\textsc{Sort}(q,v_1,\dots,v_k)$ this sorting procedure.
     In the following steps we assume that $\dist(v_i,q)\leq \dist(v_{i+1},q)$ for $i\in\{1,\dots,k-1\}$.
    
    \noindent
     {\bf Step 2.}  The first server on $v_1$ processes the request. We move it to the node $q$.
     
     \noindent
     {\bf Step 3.} For $i\in\{2,\dots k\}$ we consider the server on $v_i$. It will be inactive when some other server with a smaller index arrives on the path between $v_i$ and $q$. Section \ref{sec:distance} contains the different cases that can happen and how to compute the distance $d$
     traveled by $v_i$ before it becomes inactive. We then move the $i$-th
     server $d$ steps towards the request $q$. The new position of the $i$-th server is $v'_i$.
     
    \begin{algorithm}[ht]
    \caption{$\textsc{Request}(q)$. Request processing procedure.} \label{alg:query}
    \begin{algorithmic}
        \State $\textsc{Sort}(q,v_1,\dots,v_k)$ 
        \State $v'_1\gets q$
        \For{$i\in\{2,\dots,k\}$}
            \State $d\gets \textsc{DistanceToInactive}(q,i)$ \Comment{see Algorithm~\ref{alg:dist2inactive}}
            \State $v'_i \gets \textsc{Move}(v_i,d)$ \Comment{see Algorithm~\ref{alg:move}}
        \EndFor
    \end{algorithmic}
\end{algorithm}

\vspace{-1cm}
\subsection{Distance to Inactive State}\label{sec:distance}

When processing a request, all servers except one will eventually become inactive. The crucial part of the optimization is to compute when a server becomes inactive quickly. For the purpose of computing this time,
we claim that we can pretend that servers ``never go inactive''.
Formally, let $q$ be a request, $i$ be a server, and $j$ another server with smaller index.
We know that $i$ will become inactive because it is not the closest to the
target. However it is possible that this particular server $j$ is not the
one that will render $i$ inactive. Nevertheless, we can pretend that $j$
will never become inactive and compute the distance $i$ will travel before
going inactive because of $j$, call this distance $d_{i,j}^q$ (the index $i$
is fixed in this reasoning). We claim the following:

\begin{lemma}\label{lm:inactive}
    For any request $q$ and server $i>1$ (\emph{i.e.} a server that will become
    inactive), the distance $D_i^q$ travelled by $i$ before it becomes inactive
    is equal to $\min_{j<i}d_{i,j}^q$. 
\end{lemma}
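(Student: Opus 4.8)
The plan is to prove the identity $D_i^q = \min_{j<i} d_{i,j}^q$ by two inequalities, tracking the position of server $i$ phase by phase against the ``pretend'' trajectories in which lower-indexed servers never deactivate.

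First I would set up notation: fix the request $q$ and the server $i>1$, and for each phase $t\ge 0$ let $p_i(t)$ be the position of server $i$ after $t$ phases in the true execution of Chrobak--Larmore, and let $\tilde p_{i,j}(t)$ be the position it would have after $t$ phases if only servers $i$ and $j$ moved and $j$ were always active (equivalently, the position obtained by walking $\min(t,\dots)$ steps toward $q$ until blocked by $j$). The key monotonicity observation is that in the true execution, as long as server $i$ is active it moves exactly one step toward $q$ each phase, so $\dist(p_i(t),q)$ is non-increasing, and server $i$ goes inactive at the first phase $t$ in which some lower-indexed server $j$ occupies a node on the path from $p_i(t)$ to $q$; thus $D_i^q = \dist(v_i,q) - \dist(p_i(T),q)$ where $T$ is that phase. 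I would similarly note $d_{i,j}^q$ is the number of steps $i$ walks toward $q$ in the ``pretend-$j$'' process before $j$ blocks it.

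For the upper bound $D_i^q \le \min_j d_{i,j}^q$: let $j^\star$ be the lower-indexed server that actually renders $i$ inactive at phase $T$. Here the crucial point is that lower-indexed servers are always at least as close to $q$ as server $i$ (they are sorted and remain so), hence server $j^\star$ in the true execution stays ``ahead of or at'' the position it would have in the pretend-$j^\star$ process where it is forced active — I would argue that deactivating $j^\star$ can only have stopped it early, never pushed it further along, and that the true path of $i$ toward $q$ is a prefix of the pretend path. Consequently $j^\star$ blocks $i$ in the real run no later than it does in the pretend-$j^\star$ run, giving $D_i^q \le d_{i,j^\star}^q$, hence $\le \min_j d_{i,j}^q$. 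For the lower bound $D_i^q \ge \min_j d_{i,j}^q$: I would show that at every phase $t<T$, no lower-indexed server is on the path from $p_i(t)$ to $q$, and argue this implies that for the particular $j$ minimizing $d_{i,j}^q$, server $i$ has not yet been blocked in the pretend-$j$ process either; combined with the fact that $i$'s true trajectory and pretend trajectory toward $q$ coincide step-for-step up to blocking, this yields $T \ge d_{i,j}^q$ for that $j$, i.e. $D_i^q \ge \min_j d_{i,j}^q$.

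I expect the main obstacle to be making precise the claim that ``deactivating a server can only delay the moment it blocks server $i$, never hasten it.'' This requires a careful argument that along the path toward $q$, a lower-indexed server's true position is always weakly closer to $q$ than its pretend-active position would be — which in turn relies on an invariant, maintained by induction over phases, that the ordering of servers by distance to $q$ is preserved and that every active server moves monotonically toward $q$. I would isolate this as a sublemma (the ``servers stay sorted'' invariant) and prove it by induction on the phase count before assembling the two inequalities above; the rest of the proof is then bookkeeping about prefixes of root-to-$q$ paths.
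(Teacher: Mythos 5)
There is a genuine gap, and it sits exactly where the lemma's difficulty lies. Splitting the identity into two inequalities is fine, but the easy direction is $D_i^q\geq\min_{j<i}d_{i,j}^q$: the server $j_0$ that actually renders $i$ inactive is itself still active up to that moment, so the real run and the pretend-$j_0$ run coincide until then and $D_i^q=d_{i,j_0}^q$; no phase-by-phase coupling is needed. The hard direction is $D_i^q\leq d_{i,j}^q$ for \emph{every} $j<i$, and your sketch does not establish it. Your ``upper bound'' paragraph only argues $D_i^q\leq d_{i,j^\star}^q$ for the actual blocker $j^\star$ and then concludes ``hence $\leq\min_j d_{i,j}^q$'' --- a non sequitur, since a priori $d_{i,j^\star}^q$ could exceed the minimum; that it does not is precisely what must be proved. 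Your ``lower bound'' paragraph then asserts that if no lower-indexed server lies on the real path from $p_i(t)$ to $q$, then $i$ is also unblocked at time $t$ in the pretend-$j$ process for the minimizing $j$. This implication is false in general: in the pretend-$j$ process $j$ is forced to keep moving, so its pretend position is weakly \emph{ahead of} its real position, and it can land on $i$'s path to $q$ at a time when, in the real run, it has long been deactivated and is parked off that path. Your ``servers stay sorted / deactivation only delays blocking'' invariant yields the easy direction again, not this one.

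What is actually needed --- and what the paper supplies --- is an argument for why a $j$ with $d_{i,j}^q<D_i^q$ cannot exist. Such a $j$ must be deactivated in the real run before reaching the meeting vertex $t$ of the two paths, say by a server $\ell$ arriving at a vertex $u$ on the path from $v_j$ to $t$ with $\dist(v_\ell,u)\leq\dist(v_j,u)$. One then computes
$d_{i,\ell}^q=\dist(v_\ell,t)\leq\dist(v_\ell,u)+\dist(u,t)\leq\dist(v_j,u)+\dist(u,t)=\dist(v_j,t)=d_{i,j}^q$,
so $\ell$ is at least as early a pretend blocker of $i$. Choosing $j$ extremal (minimal $d_{i,j}^q$, ties broken by smallest index) forces equality throughout, hence $j$ and $\ell$ reach $u$ simultaneously, and the tie-breaking convention then says $\ell$, not $j$, is the one deactivated at $u$ --- a contradiction. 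This descent/extremal step is the missing idea in your proposal; without it, the two trajectories you are coupling simply do not compare in the direction you need.
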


\begin{proof}
    Let $j_0$ be one of the servers that renders $i$ inactive, then
    $D_i^q=d_{i,j_0}^q$ because $j_0$ will not become inactive before
    it makes $i$ inactive, hence for the purpose of computing $D_i$, it makes
    no difference whether $j_0$ eventually becomes inactive or not.
    Therefore, we only need to prove no other $d_{i,j}^q$ is strictly smaller.
    Assume for contradiction that $d_{i,j}^q<D_i^q$ for some $j<i$, and pick $j$ so that $d_{i,j}^q$ is minimum among all $j<i$ (and in case of equality,
    pick $j$ the smallest possible). Then, it means there exists a vertex $t$
    such that $d_{i,j}^q=\dist(v_j,t)\leqslant\dist(v_i,t)$
    and $t$ is on the paths from $v_i$ and $v_j$ to $q$.
    Now we claim that $j$ must become inactive
    before it reaches $t$. Indeed, if not, it would reach $t$ and makes
    $i$ inactive after a distance $d_{i,j}^q<D_i^q$, which is impossible by
    definition of $D_i^q$. Therefore, $j$ is rendered inactive
    before reaching $t$ by another server $\ell$ reaching some vertex $u$ on
    the path from $v_j$ to $t$.
    In particular, we must have $D_j^q=\dist(v_\ell,u)\leqslant\dist(v_j,u)$ and
    $\dist(v_j,u)<\dist(v_j,t)$.
    But now observe that if we pretend that $\ell$ never goes inactive,
    it will reach $t$ after travelling a distance
    $\dist(v_\ell,u)+\dist(u,t)\leqslant\dist(v_j,u)+\dist(u,t)=\dist(v_j,t)$
    hence
    $d_{i,\ell}^q=\dist(v_\ell,t)\leqslant\dist(v_j,t)=d_{i,j}^q$.
    But we chose $j$ so that $d_{i,j}^q$ is minimal so we must have
    $d_{i,\ell}^q=d_{i,j}^q$ and therefore, $j<\ell$ (we sort by index
    in case of tie). Going back to the computation, we see that
    $d_{i,\ell}^q=d_{i,j}^q$ implies that $\dist(v_i,u)=\dist(v_\ell,u)$,
    \emph{i.e.} $i$ and $\ell$ reach $u$ at the same time. But when two
    servers reach the same vertex simultaneously, the greater index goes
    inactive, \emph{i.e.} $\ell$ would go inactive because of $j$.
    This is a contradiction because we assumed that $\ell$ is the one
    making $j$ inactive.
\end{proof}

We have now reduced the problem to the following question:
given a server $i$ and another server $j$ with smaller index, compute
$d_{i,j}^q$, the distance until $i$ becomes inactive because of $j$,
pretending that $j$ never goes inactive.
There are several cases to consider, depicted in
Figure~\ref{fig:distance_inactive}, depending on the relationship of $v_i$, $v_j$ and $q$
in the tree. Let $t$ be the vertex where the paths from $v_i$ to $q$
and $v_j$ to $q$ intersect the first time, then $d^q_{i,j}=\dist(v_j,t)$
and
\begin{enumerate}
    \item\label{enum:dist:case_1}
        if $q$ is an ancestor of $v_i$ and $v_j$, then $t=\LCA(v_i,v_j)$;
    \item\label{enum:dist:case_2}
        if $q$ is an ancestor of $v_i$ but not of $v_j$, then $t=q$;
    \item\label{enum:dist:case_3}
        if $v_i$ is an ancestor of $q$, then $t=\LCA(q,v_j)$ because
        $v_i$ must also be ancestor of $v_j$ since $v_j$ is closer to $q$
        than $v_i$;
    \item\label{enum:dist:case_4}
        if the LCA of $v_j$ and $q$ is not an ancestor of $v_i$,
            then $t=\LCA(v_j,q)$;
    \item\label{enum:dist:case_5}
        if the LCA of $v_i$ and $v_j$ is not an ancestor of $q$, then
        $t=\LCA(v_j,v_i)$;
    \item\label{enum:dist:case_6}
        otherwise $t=\LCA(v_i,q)$.
\end{enumerate}

Note that in this case distinction, the order of the cases is important: if cases 1 to 3 do not apply for example, then we know that $v_i$ is not an
ancestor of $q$ and $q$ is not an ancestor of $v_i$.

\begin{algorithm}[ht]
    \caption{$\textsc{DistanceToInactive}(q,i)$. Compute the distance travelled before going inactive.} \label{alg:dist2inactive}
    \begin{algorithmic}
        \State $d\gets\infty$
        \For{$j\in\{1,\dots,i-1\}$}
            \State $t\gets $ do case analysis as above
            \State $d\gets\min(d,\dist(t,v_j))$
        \EndFor
        \State\Return{$d$}
    \end{algorithmic}
\end{algorithm}

\vspace{-0.3cm}
\begin{lemma}\label{lm:dist2inactive}
    $\textsc{DistanceToInactive}$ has time complexity
    $O\left(k\right)$. 
\end{lemma}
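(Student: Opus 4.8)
The plan is to analyze the loop in Algorithm~\ref{alg:dist2inactive} directly. The procedure consists of a single \texttt{for} loop over $j \in \{1,\dots,i-1\}$, so there are at most $k-1 = O(k)$ iterations. Each iteration does two things: it performs the case analysis to determine the meeting vertex $t$, and it updates $d$ with a distance computation $\dist(t, v_j)$. It therefore suffices to show that each iteration takes $O(1)$ time.

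For the case analysis, I would observe that every branch of the six-way case distinction is decided by a constant number of ancestor checks and LCA queries. Recall from the preprocessing that we have the depth-first search time labels, so ``is $u$ an ancestor of $v$'' can be answered in $O(1)$ time by comparing the intervals $[f(u),\ell(u)]$ and $[f(v),\ell(v)]$. Likewise, by Lemma~\ref{lm:lca}(ii), each call to $\textsc{LCA}(\cdot,\cdot)$ runs in $O(1)$ time after the preprocessing of Theorem~\ref{th:preproc}. Each of the six cases references at most one LCA computation (of $v_i,v_j$, or of $v_j,q$, or of $v_i,q$) together with at most a constant number of ancestor tests, so determining $t$ costs $O(1)$ per iteration. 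Finally, once $t$ is known, $\dist(t,v_j)$ is computed via the identity $\dist(t,v_j)=\dist(1,t)+\dist(1,v_j)-2\dist(1,\textsc{LCA}(t,v_j))$ stated in Step~1 of the request processing; since the root distances are stored in an array from the preprocessing and one more LCA call is $O(1)$, this is again $O(1)$.

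Combining these, each of the $O(k)$ loop iterations costs $O(1)$, and the final \texttt{return} is $O(1)$, giving total time complexity $O(k)$ as claimed. The only point needing care — and the ``main obstacle'' such as it is — is to confirm that the correctness of the case analysis (which is argued separately via Lemma~\ref{lm:inactive} and the geometric discussion of $t$) does not require any step whose cost is super-constant; in particular one should double-check that the ordering-dependent fall-through between cases (``if cases 1 to 3 do not apply, then $v_i$ is not an ancestor of $q$ and vice versa'') is realized by simply evaluating the ancestor predicates in sequence, each in $O(1)$, rather than by any search over the tree. Since all primitives involved (ancestor test, LCA, stored root distance) are $O(1)$ post-preprocessing, this is routine.
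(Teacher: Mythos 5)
Your proof is correct and follows essentially the same route as the paper: each of the $O(k)$ loop iterations costs $O(1)$ because the case analysis reduces to a constant number of ancestor tests, LCA queries, and root-distance lookups, all of which are constant time after preprocessing. The only cosmetic difference is that you appeal to DFS time labels for the ancestor test (which the stated preprocessing of Algorithm~\ref{alg:preproc} does not explicitly build), whereas the paper's proof checks whether $u$ is an ancestor of $v$ via $\LCA(u,v)=u$, which is already available in $O(1)$ from the LCA preprocessing.
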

\begin{proof}
    Since a vertex $u$ is ancestor of $v$ if $\LCA(u,v)=u$, we can check this
    condition in $O(1)$ due to results from Section \ref{sec:lca}.
    It follows that we can compute $d_{i,j}^q$ for every $i,j,q$ in $O(1)$
    and there are at most $k$ other servers to consider.
\end{proof}

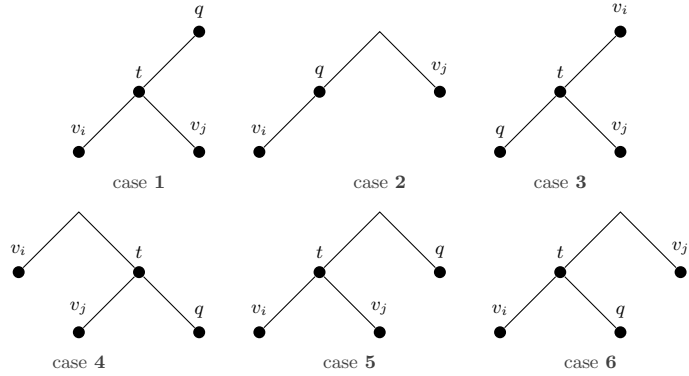
\begin{figure}
    \centering
    \begin{tikzpicture}[scale=0.8,transform shape,myvertex/.style={circle,fill=black,inner sep=2pt},
            mycase/.style={color=darkgray}]
        \begin{scope}[shift={(1,0)}]
            \draw[myvertex] node[myvertex,label=$q$] (q) {}
                node[myvertex,label=$t$] (t) at (-1,-1) {}
                node[myvertex,label=$v_i$] (vi) at (-2,-2) {}
                node[myvertex,label=$v_j$] (vj) at (0,-2) {};
            \draw (vi) -- (t) -- (q);
            \draw (vj) -- (t);
            \node[mycase] at (-1,-2.5) {case~\textbf{\ref{enum:dist:case_1}}};
        \end{scope}
        \begin{scope}[shift={(4,0)}]
            \draw[myvertex] coordinate (t) {}
                node[myvertex,label=$q$] (q) at (-1,-1) {}
                node[myvertex,label=$v_i$] (vi) at (-2,-2) {}
                node[myvertex,label=$v_j$] (vj) at (1,-1) {};
            \draw (vi) -- (q) -- (t) -- (vj);
            \node[mycase] at (0,-2.5) {case~\textbf{\ref{enum:dist:case_2}}};
        \end{scope}
        \begin{scope}[shift={(8,0)}]
            \draw[myvertex] node[myvertex,label=$v_i$] (vi) {}
                node[myvertex,label=$t$] (t) at (-1,-1) {}
                node[myvertex,label=$q$] (q) at (-2,-2) {}
                node[myvertex,label=$v_j$] (vj) at (0,-2) {};
            \draw (vi) -- (t) -- (q);
            \draw (vj) -- (t);
            \node[mycase] at (-1,-2.5) {case~\textbf{\ref{enum:dist:case_3}}};
        \end{scope}
        \begin{scope}[shift={(-1,-3)}]
            \draw[myvertex] coordinate (up) {}
                node[myvertex,label=$v_i$] (vi) at (-1,-1) {}
                node[myvertex,label=$t$] (t) at (1,-1) {}
                node[myvertex,label=$q$] (q) at (2,-2) {}
                node[myvertex,label=$v_j$] (vj) at (0,-2) {};
            \draw (vi) -- (up) -- (t) -- (q);
            \draw (t) -- (vj);
            \node[mycase] at (0,-2.5) {case~\textbf{\ref{enum:dist:case_4}}};
        \end{scope}
        \begin{scope}[shift={(4,-3)}]
            \draw[myvertex] coordinate (up) {}
                node[myvertex,label=$v_i$] (vi) at (-2,-2) {}
                node[myvertex,label=$t$] (t) at (-1,-1) {}
                node[myvertex,label=$q$] (q) at (1,-1) {}
                node[myvertex,label=$v_j$] (vj) at (0,-2) {};
            \draw (vi) -- (t) -- (up) -- (q);
            \draw (t) -- (vj);
            \node[mycase] at (-0.5,-2.5) {case~\textbf{\ref{enum:dist:case_5}}};
        \end{scope}
        \begin{scope}[shift={(8,-3)}]
            \draw[myvertex] coordinate (up) {}
                node[myvertex,label=$v_i$] (vi) at (-2,-2) {}
                node[myvertex,label=$t$] (t) at (-1,-1) {}
                node[myvertex,label=$v_j$] (vj) at (1,-1) {}
                node[myvertex,label=$q$] (q) at (0,-2) {};
            \draw (vi) -- (t) -- (q);
            \draw (t) -- (up) -- (vj);
            \node[mycase] at (-0.5,-2.5) {case~\textbf{\ref{enum:dist:case_6}}};
        \end{scope}
    \end{tikzpicture}
    \caption{List of the various cases to consider when computing the distance before a server $i$ is rendered inactive by a (closer to the request) server $j$.\label{fig:distance_inactive}}
\end{figure}

\subsection{How to Move a Server}\label{sec:move}

We now consider the following problem: given a server $v$ and a distance $z$, how to efficiently compute the new position of the server after moving it $z$ steps towards $q$.  
We use the binary lifting technique for this procedure.

Let $l=LCA(v,q)$. If $\dist(l,v)\geq z$, then the result node is on the path between $v$ and $l$. We can thus invoke $\textsc{MoveUp}(v,z)$ from Section \ref{sec:bl}. Otherwise, we should move the server first to $l$. We then move it $z-\dist(l,v)$ steps 
down towards $q$ from $l$. Moving down from $l$ is the same as moving up $\dist(l,q)-(z-\dist(l,v))$ steps from $q$. The algorithm is presented in Algorithm \ref{alg:move}.
  \begin{algorithm}[ht]
    \caption{$\textsc{Move}(v,z)$. Moves of a server from $v$ to distance $z$ on a path from $v$ to $q$.} \label{alg:move}
    \begin{algorithmic}
        \State $l=\textsc{LCA}(v,q)$
        \If{ $z\leq \dist(l,v)$}
        \State $Result \gets \textsc{MoveUp}(v,z)$
        \EndIf
       \If{ $z>\dist(l,v)$}
        \State $z\gets z-\dist(l,v)$
        \State $Result \gets \textsc{MoveUp}(q,\dist(l,q)-z)$
        \EndIf
        \State \Return{$Result$}
    \end{algorithmic}
\end{algorithm}

\begin{lemma}\label{lm:move}
    The time complexity of \textsc{Move} is $O\left(\log n\right)$.
\end{lemma}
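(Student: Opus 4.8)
The plan is to bound the running time of \textsc{Move} by inspecting Algorithm~\ref{alg:move} line by line and invoking the complexities already established for the primitives it calls. First I would note that the very first line computes $l=\textsc{LCA}(v,q)$, which takes $O(1)$ time by Lemma~\ref{lm:lca}(ii), given that the preprocessing of Theorem~\ref{th:preproc} has been carried out. Next, the two conditional branches each require the quantity $\dist(l,v)$ (and, in the second branch, $\dist(l,q)$); each such distance is computed in $O(1)$ time via the identity $\dist(v,u)=\dist(1,v)+\dist(1,u)-2\dist(1,\textsc{LCA}(v,u))$ using the stored root-distances array and a constant-time LCA query, exactly as explained in Step~1 of the request processing. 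Hence the comparisons and arithmetic in the \textbf{if} statements cost $O(1)$.

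The only remaining contributions are the calls to \textsc{MoveUp}. In the first branch there is a single call $\textsc{MoveUp}(v,z)$, and in the second branch a single call $\textsc{MoveUp}(q,\dist(l,q)-z)$; by the binary lifting guarantee from Section~\ref{sec:bl}, each of these runs in $O(\log n)$ time. Since the two branches are mutually exclusive, at most one \textsc{MoveUp} call is executed per invocation of \textsc{Move}. Summing the contributions $O(1)+O(1)+O(\log n)=O(\log n)$ gives the claimed bound.

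I do not anticipate a genuine obstacle here: the statement is essentially a bookkeeping corollary of Lemma~\ref{lm:lca} and the binary lifting complexity. The one point that deserves a sentence of care is the correctness-vs-cost distinction — the lemma only asks for the time bound, so I would not belabor the argument (given earlier in Section~\ref{sec:move}) that \textsc{MoveUp}$(q,\dist(l,q)-z)$ indeed lands on the node at distance $z$ from $v$ along the path to $q$; I would simply take the algorithm's correctness as given and restrict the proof to the running-time accounting above.

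\begin{proof}
    Computing $l=\textsc{LCA}(v,q)$ takes $O(1)$ time by Lemma~\ref{lm:lca}(ii). Each distance of the form $\dist(\cdot,\cdot)$ appearing in the conditions and arguments is evaluated in $O(1)$ time using the stored array of root-distances together with a constant-time LCA query, so all comparisons and arithmetic cost $O(1)$. The two branches are mutually exclusive, and each performs a single call to \textsc{MoveUp}, which runs in $O(\log n)$ time by the binary lifting technique of Section~\ref{sec:bl}. Adding these contributions yields total time complexity $O(\log n)$.
\end{proof}
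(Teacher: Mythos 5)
Your proof is correct and follows essentially the same route as the paper's: charge $O(1)$ for the LCA query and the distance computations (thanks to the preprocessing), $O(\log n)$ for the single \textsc{MoveUp} call via binary lifting, and sum. The only difference is that you spell out the bookkeeping branch by branch, which the paper states more tersely.
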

\begin{proof}
The time complexity of $\textsc{MoveUp}$ is $O(\log n)$ using the binary lifting technique from \Cref{sec:bl} and $\textsc{LCA}$ is in $O(1)$
by \Cref{sec:lca}. Furthermore, we can compute the distance between any two nodes in $O(1)$ thanks to the preprocessing.
Therefore, the total complexity is $O(\log n)$.
\end{proof}

\begin{theorem}\label{th:query-proc}
The time complexity of the request processing phase is $O\left(k^2 + k\log n\right)$.
\end{theorem}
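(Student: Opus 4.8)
The plan is to decompose the request-processing cost into the three steps of Algorithm~\ref{alg:query} and bound each separately. First I would handle the sorting step: computing $\dist(v_i,q)$ for a single server takes $O(1)$ time using the formula $\dist(v,q)=\dist(1,q)+\dist(1,v)-2\dist(1,\textsc{LCA}(v,q))$, since the distances from the root are stored in an array after preprocessing and $\textsc{LCA}$ runs in $O(1)$ by \Cref{lm:lca}. Hence computing all $k$ distances costs $O(k)$, and sorting $k$ values costs $O(k\log k)$, which is $O(k^2)$; alternatively one observes $O(k\log k)\le O(k^2)$ trivially, so $\textsc{Sort}$ contributes $O(k^2)$ (or even the sharper $O(k\log k)$, but $O(k^2)$ suffices).

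Next I would account for the main loop over $i\in\{2,\dots,k\}$. By \Cref{lm:dist2inactive}, each call $\textsc{DistanceToInactive}(q,i)$ runs in $O(k)$ time (it iterates over the at most $i-1<k$ servers with smaller index, doing $O(1)$ work per server via the case analysis of \Cref{sec:distance}). By \Cref{lm:move}, each call $\textsc{Move}(v_i,d)$ runs in $O(\log n)$ time. Since the loop executes $k-1$ times, its total cost is $\sum_{i=2}^{k} \bigl(O(k)+O(\log n)\bigr) = O(k^2)+O(k\log n)$.

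Finally I would add the $O(1)$ cost of setting $v'_1\gets q$ in Step~2, which is negligible. Summing the contributions of all three steps gives $O(k^2) + O(k^2) + O(k\log n) = O(k^2 + k\log n)$, which is the claimed bound. There is no real obstacle here: the proof is just a matter of invoking the already-established lemmas (\Cref{lm:lca}, \Cref{lm:dist2inactive}, \Cref{lm:move}) and summing, with the only mild subtlety being to confirm that the sorting term $O(k\log k)$ is absorbed into $O(k^2)$ so it need not appear separately in the final bound.
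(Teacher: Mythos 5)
Your proposal is correct and follows essentially the same decomposition as the paper: $O(k\log k)$ for sorting, then $k$ iterations each costing $O(k)$ for \textsc{DistanceToInactive} (by \Cref{lm:dist2inactive}) plus $O(\log n)$ for \textsc{Move} (by \Cref{lm:move}), summing to $O(k^2+k\log n)$. If anything, you are slightly more careful than the paper's own write-up, which misstates the per-call cost of \textsc{DistanceToInactive} as $O(1)$ before correctly concluding $O(k+\log n)$ per server.
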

\begin{proof}
The complexity of sorting the servers by distance is $O(k\log k)$.
For each server, we compute the distance traveled before being inactive
in $O(1)$ by \Cref{lm:dist2inactive}. We then move each server by that
distance in time $O(\log n)$ by \Cref{lm:move}.
Therefore, the complexity of processing one server is $O(k+\log n)$,
and there are $k$ servers.
\end{proof}

\section{Binary Search for a Function with Errors }\label{sec:binsearch}

Consider a search space $S=\{1,\dots,m\}$ and a subset $M\subseteq S$
of marked elements. Define the indicator function $g_M:S^2\to\{0,1\}$ by
    $g_M(\ell,r)=1\text{ if }\{\ell,\ldots,r\}\cap M\neq\varnothing,
        \text{ and 0 otherwise}.$
In other words, $g_M(\ell,r)$ indicates whether there is a marked element
from $M$ in the interval $[\ell,r]$.
Now assume that we do not know $M$ but have access to a two-sided
probabilistic approximation $\tilde{g}$ of $g_M$.
Formally, there is a probability $p<1/2$ such that for any
$\ell,r\in S$,
$\tilde{g}(\ell,r)= g_M(\ell,r)$ with probability at least $1-p$ and $1-g_M(\ell,r)$ otherwise.
%
Intuitively, $\tilde{g}$ behaves like $g_M$ with probability at least $1-p$.
However, sometimes it makes mistakes and returns a completely wrong answer.
Note that $\tilde{g}$ has \emph{two-sided} error:
it can return $0$ even if the interval $[\ell,r]$ contains a marked element, but more importantly, it can also return $1$ even though the interval
does \emph{not} contain any marked element.
We further assume that a call to $\tilde{g}(\ell,r)$ takes time
$T(r-\ell)$ where $T$ is some nondecreasing function. Typically, we assume
that $T(m)=o(m)$, \emph{i.e.} $T$ is strictly better than a linear search.

We now consider the problem of finding the first marked element in $S$,
with probability at least, say, $1/2$. A trivial algorithm is to perform
a linear search in $O(m)$ until $\tilde{g}$ returns $1$. If $\tilde{g}$
had no errors, we could perform a binary search in $O(T(m))$.
This does not work very well in the presence of errors because decisions
made are irreversible, and errors accumulate quickly. Our observation
is that if we modify the binary search to boost the success probability of certain calls to $\tilde{g}$, we can still solve the problem in time in
$O\left( T(m)\right)$.

 \subsection{Algorithm}
 
The idea is inspired by \cite{abikkpssv2020} and very similar to \cite{FRDU94}
except that the calls to $\tilde{g}$ (the ``comparisons'') do not necessarily have
unit cost. The difference with \cite{FRDU94} is apparent in the complexity since
we can avoid the extra $\log$ factor when the cost of each comparison becomes
high enough.
For reasons that become clear in the proof, we need to boost some calls' success probability. We do so by repeating them several times and
taking the majority: by this we mean that we take the most common answer,
and return an error in the case of a tie.

\begin{algorithm}
    \caption{Binary search for a function with two-sided
    errors\label{alg:bin_search_two_sided}}
    \begin{algorithmic}
        \State $\ell\gets 1, r\gets m+1$\Comment{search interval}
        \State $d\gets 1$ \Comment{depth of the search}
        \While{$\ell<r$}
            \State $mid \gets \lfloor(\ell+r)/2\rfloor $
            \State $v_l\gets \tilde{g}(\ell,mid)$
                \Comment{repeat $d$ times and take the majority}
            \If{$v_l=0$}
                \State $\ell\gets mid+1$
            \Else
                \State $r\gets mid$
            \EndIf
            \State $d\gets d+1$
        \EndWhile
    \end{algorithmic}
\end{algorithm}

\begin{proposition}\label{lm:bin-search}
    Assume that $T$ satisfies
    $T(m/k)=O(T(m)/k^\alpha)$ for some $\alpha>0$ and every $m$ and $k$,
    then
    with probability more than $0.5$,
    \Cref{alg:bin_search_two_sided} returns the position of the first marked
    element, or $m+1$ if none exists.
    The running time is $O(T(m))$. 
\end{proposition}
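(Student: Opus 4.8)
The plan is to analyze Algorithm~\ref{alg:bin_search_two_sided} by tracking the random variable $d$, the depth of the search, and showing that with constant probability the algorithm never makes a ``fatal'' error, i.e. one that discards the interval containing the first marked element. The key observation is that at depth $d$ the call to $\tilde g$ is boosted by repeating it $d$ times and taking the majority, so by a standard Chernoff/Hoeffding estimate the probability that the boosted call returns the wrong value is at most $c^{d}$ for some constant $c<1$ depending only on $p$ (concretely, one can take $c = \bigl(2\sqrt{p(1-p)}\bigr)$, or simply bound the tie-or-minority probability by $e^{-\Omega(d)}$). Since the while loop halves the interval each iteration, it runs for at most $\lceil\log_2(m+1)\rceil$ iterations, and the total probability of \emph{any} boosted call ever being wrong is at most $\sum_{d\ge 1} c^{d} = \frac{c}{1-c}$, which is a constant but not necessarily below $1/2$; to get below $1/2$ one starts the boosting at depth $d_0$ for a suitable constant $d_0$ (or repeats each call $\beta d$ times for a constant $\beta$), so that $\sum_{d\ge 1} c^{\beta d} < 1/2$. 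Conditioned on every boosted call being correct, the algorithm behaves exactly like an exact binary search on $g_M$ and therefore returns the position of the first marked element (or $m+1$), which establishes correctness.

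For the running time, the cost of the iteration at depth $d$ is $d$ times the cost of one call to $\tilde g$ on an interval of length roughly $m/2^{d-1}$, i.e. $O\!\left(d\cdot T(m/2^{d-1})\right)$. Summing over $d$ from $1$ to $\log_2(m+1)$, the total running time is
\begin{equation*}
    O\!\left(\sum_{d=1}^{\lceil\log_2(m+1)\rceil} d\cdot T\!\left(m/2^{d-1}\right)\right).
\end{equation*}
Here the hypothesis $T(m/k)=O(T(m)/k^\alpha)$ is exactly what is needed: applying it with $k=2^{d-1}$ gives $T(m/2^{d-1}) = O\!\left(T(m)/2^{\alpha(d-1)}\right)$, so the sum is bounded by
\begin{equation*}
    O\!\left(T(m)\sum_{d=1}^{\infty} \frac{d}{2^{\alpha(d-1)}}\right) = O(T(m)),
\end{equation*}
since $\sum_{d\ge1} d\,2^{-\alpha(d-1)}$ converges for every $\alpha>0$. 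This is precisely the place where the polynomial-decay assumption on $T$ lets us absorb the extra factor $d$ coming from the boosting and avoid the $\log$ overhead that a unit-cost comparison model (as in \cite{FRDU94}) would incur.

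The main obstacle I expect is the probabilistic bookkeeping: one has to be careful that ``every boosted call is correct'' is genuinely sufficient for correctness, and in particular that the argument does not secretly assume the interval always contains a marked element. Because $\tilde g$ has two-sided error, a wrong ``$1$'' at some step would send the search into a subinterval with \emph{no} marked element, and then all subsequent exact answers are $0$, driving $\ell$ up to $r$ and returning a wrong position; conversely a wrong ``$0$'' discards the first marked element. Both are covered by the single event ``some boosted call errs,'' so conditioning on its complement is clean, but this needs to be stated explicitly. A secondary technical point is choosing the boosting schedule (a constant additive offset $d_0$, a constant multiplicative factor $\beta$, or both) so that the failure probability is strictly below $1/2$ while the running-time series still converges; since both the failure sum $\sum c^{\beta d}$ and the time sum $\sum d\,c'^{d}$ converge geometrically, any large enough constant works, and one should simply fix such a constant at the start of the proof.
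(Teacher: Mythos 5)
Your proposal is correct and follows essentially the same route as the paper's proof: bound the failure probability of the depth-$d$ majority vote by a geometric term (the paper uses $\binom{2u}{u}p^u\leqslant(2ep)^u$ where you invoke Chernoff), sum it over iterations, shrink the resulting constant below $1/2$ by boosting each individual call a constant number of times, and bound the running time by $\sum_d d\cdot T(m/2^{d-1})=O(T(m))$ using the polynomial-decay hypothesis on $T$. Your explicit remark that a spurious ``$1$'' can strand the search in an interval with no marked element is a point the paper dismisses with ``correctness when there are no errors is clear,'' but it does not change the argument.
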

\begin{proof}
The correctness of the algorithm, when there are no errors, is clear.
We need to argue about the complexity and error probability.

At the $u^{th}$ iteration of the loop, the algorithm considers a segment
$[\ell,r]$ of length at most $m\cdot 2^{-(u-1)}$.
The complexity of $\tilde{g}(\ell,mid)$ is at most $O(T(r-\ell))=O\left(T(m\cdot 2^{-(u-1)})\right)$
but we repeat it $2u$ times, so the total complexity of the $u^{th}$ iteration is $O\left(uT\left(m\cdot 2^{-(u-1)}\right)\right)$.
The number of iterations is at most $\log_2 m$. Hence, the total complexity
is
\begin{align*}
    O\left(\sum_{u=1}^{\log_2 m}uT\left(m\cdot 2^{-(u-1)}\right)\right)
    &=O\left(\sum_{u=1}^{\log_2 m}T\left(m\right)2^{-\alpha u}u\right)
    =O\left(T\left(m\right)\sum_{u=1}^{\infty}2^{-\alpha u}u\right)\\
    &=O\left(T(m)\frac{2^{\alpha}}{(2^{\alpha}-1)^2}\right)
    =O\left(T(m)\right).
\end{align*}

Finally, we need to analyze the success probability of the algorithm:
at the $u^{th}$ iteration, the algorithm will run each test $2u$ times
and each test has a constant probability of failure $p$.
Hence
for the algorithm to fail at iteration $u$, at least half of the
$2u$ runs must fail: this happens with probability at most
$
    {2u\choose u}p^u
    \leqslant \left(\frac{2ue}{u}\right)^up^u
    \leqslant (2ep)^u,
$
where $e=\exp(1)$.
Hence, the probability that the algorithm fails is bounded by
$
    \sum\limits_{u=1}^{\log_2 m}(2ep)^u
    \leqslant\sum_{u=1}^{\infty}(2ep)^u
    \leqslant \frac{2ep}{1-2ep}
$.
By taking $p$ small enough (say $2ep<\tfrac{1}{3}$), which
is always possible by repeating the calls to $\tilde{g}$ a constant number
of times to boost the probability, we can ensure that the algorithm fails less than half
of the time.
\end{proof} 

\begin{remark}
    The condition $T(m/k)=O(T(m)/k^\alpha)$ for some $\alpha>0$ and every $m$ and $k$ is clearly satisfied by any function of the form $T(m)=m^\alpha\log^\beta m\log^\gamma\log m$.
\end{remark}

\subsection{Application to Quantum Search}\label{sec:first-one}

A particularly useful application of the previous section is for quantum
search, particularly when $\tilde{g}$ is a Grover-like search. Indeed,
Grover's search can decide in $O(\sqrt{m})$ queries if a marked element exists
in an array of size $m$, with a constant probability of error. 

More precisely, assume that we have a function $f:\{1,\dots m\}\to\{0,1\}$
and the task is to find the minimal $x\in\{1,\dots, m\}$ such that $f(x)=1$.
If we let $\tilde{g}(\ell,r)=\textsc{GROVER}(\ell,r,f)$ then $\tilde{g}$
has query complexity $T(m)=\sqrt{m}$ and fails with constant probability.
Hence, we can apply \Cref{lm:bin-search} and obtain an algorithm to
find the first marked element with query complexity $T(m)=O(\sqrt{m})$ and constant probability of error. In fact, note that we are not making
use of \Cref{lm:bin-search} to its full strength because $\tilde{g}$ really
has one-sided error: it will never return $1$ if there are no marked
element. We will make use of this observation later. 

\begin{proposition}\label{lm:the-first-one-base}
    There is a quantum algorithm that finds the first marked element in an array of size $m$ in $O(\sqrt{m})$ queries and error probability
    $<0.5$.
\end{proposition}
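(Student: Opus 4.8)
The plan is to directly apply \Cref{lm:bin-search} with the right choice of $\tilde g$, so the proof should be a short corollary. First I would set $\tilde g(\ell,r)=\textsc{GROVER}(\ell,r,f)$, the Grover search that looks for an index $x\in[\ell,r]$ with $f(x)=1$. Since Brassard \emph{et al.}'s version of Grover search handles an unknown number of marked elements, one call costs $O(\sqrt{r-\ell+1})$ queries and errs with some constant probability $p$; by repeating a constant number of times and taking the majority we may assume $p$ is as small as we like, in particular $2ep<1/3$, matching the hypothesis of \Cref{lm:bin-search}. Thus $T(m)=\sqrt m$, which satisfies $T(m/k)=O(T(m)/k^{1/2})$, i.e. the hypothesis of \Cref{lm:bin-search} holds with $\alpha=1/2$.

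Next I would invoke \Cref{lm:bin-search} verbatim: with probability more than $0.5$, \Cref{alg:bin_search_two_sided} returns the position of the first index $x$ with $f(x)=1$ (or $m+1$ if none exists), and its running time — hence query complexity, since each call to $\tilde g$ is a sequence of oracle queries and the non-query overhead is of the same order — is $O(T(m))=O(\sqrt m)$. This is exactly the claimed statement, so there is essentially nothing left to prove beyond checking that the preconditions of \Cref{lm:bin-search} are met.

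The only point that needs a sentence of care is the distinction between \emph{query} complexity and \emph{time} complexity: \Cref{lm:bin-search} is stated for running time $T$, while \Cref{lm:the-first-one-base} is about queries. I would note that the argument of \Cref{lm:bin-search} counts each invocation of $\tilde g(\ell,r)$ as a single unit of cost $T(r-\ell)$; reading $T$ as "number of queries made by one Grover call on an interval of that size" rather than "wall-clock time", the identical summation $O\bigl(\sum_{u}u\,T(m2^{-(u-1)})\bigr)=O(T(m))$ bounds the total number of queries. I would also remark, as the excerpt already hints, that $\textsc{GROVER}$ has \emph{one-sided} error (it never reports a marked element when none exists), so \Cref{lm:bin-search} is being used with more slack than necessary here; this observation is not needed for the present bound but is flagged for later use. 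The main "obstacle" is therefore purely expository — making the query-vs-time correspondence explicit — rather than mathematical; the combinatorial heavy lifting was already done in \Cref{lm:bin-search}.
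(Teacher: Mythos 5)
Your proposal is correct and matches the paper's own argument: the paper likewise obtains this proposition as an immediate corollary of \Cref{lm:bin-search} by taking $\tilde{g}(\ell,r)=\textsc{GROVER}(\ell,r,f)$ with $T(m)=\sqrt{m}$ (satisfying the hypothesis with $\alpha=1/2$), and it even makes the same side remark that Grover's one-sided error means the lemma is not being used at full strength. Your extra care about the query-versus-time reading of $T$ is a reasonable expository addition but does not change the argument.
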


As observed above, we are not really using \Cref{lm:bin-search} to its full
strength because Grover's search has one-sided error. This suggests that
there is room for improvement. Suppose that we now only have access to
a two-sided probabilistic approximation $\tilde{f}$ of $f$. In other words,
$f$ can now make mistakes: it can return $1$ for an unmarked element or
$0$ for a marked element with some small probability. Formally,
$\tilde{f}(x)=f(x)$ with probability at least $1-p$ and $1-f(x)$ otherwise,
for some probability $p<1/2$. We cannot apply Grover's search directly
in this case, but some variants have been
developed that can handle bounded errors \cite{PMR03}. Using this result,
we can build a two-sided error function $\tilde{g}$ with high probability
of success and time complexity $O(\sqrt{m})$. Applying \Cref{lm:bin-search} again,
we obtain the following improvement:

\begin{proposition}\label{prop:the-first-one-base-two-sided}
    There exists a quantum algorithm $\textsc{FindFirst}$
    that finds the first marked element
    in an array of size $m$ in $O(\sqrt{m})$ queries and error probability less than $0.5$;
    even when the oracle access to the array has a two-sided error.
\end{proposition}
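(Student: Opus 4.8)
The plan is to reduce Proposition~\ref{prop:the-first-one-base-two-sided} to Proposition~\ref{lm:bin-search} by constructing, from the noisy oracle $\tilde{f}$, an interval-test function $\tilde{g}$ that is a genuine two-sided probabilistic approximation of $g_M$ with $T(m)=O(\sqrt m)$, where $M=\{x: f(x)=1\}$. First I would recall the bounded-error amplitude-amplification result of \cite{PMR03}: given an oracle that on input $x$ returns $f(x)$ with probability at least $1-p$ for some fixed $p<1/2$, there is a quantum algorithm that, on a subarray $\{\ell,\dots,r\}$, decides whether it contains a marked element (an $x$ with $f(x)=1$) using $O(\sqrt{r-\ell+1})=O(\sqrt m)$ queries to $\tilde f$ and errs with probability at most some constant $p'<1/2$. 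I would define $\tilde g(\ell,r)$ to be one run of this procedure. Then $\tilde g$ satisfies exactly the hypothesis required in Section~\ref{sec:binsearch}: there is a constant $p'<1/2$ such that $\tilde g(\ell,r)=g_M(\ell,r)$ with probability at least $1-p'$ and the complemented value otherwise, and its running time is $T(r-\ell)=O(\sqrt{r-\ell})$, which is nondecreasing.

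Next I would check that this $T$ meets the scaling hypothesis of Proposition~\ref{lm:bin-search}: with $T(m)=\sqrt m$ we have $T(m/k)=\sqrt{m/k}=T(m)/k^{1/2}$, so $\alpha=1/2$ works (this is also covered by the remark after the proposition). Applying Proposition~\ref{lm:bin-search} to this $\tilde g$ immediately yields an algorithm $\textsc{FindFirst}$ that, with probability more than $1/2$, returns the position of the first $x$ with $f(x)=1$ (or $m+1$ if none exists), with total running time $O(T(m))=O(\sqrt m)$ queries to $\tilde f$. Since each step of the binary search only ever calls $\tilde g$, and each call to $\tilde g$ is built purely out of queries to the noisy oracle $\tilde f$, the whole construction tolerates two-sided error in the input oracle, as claimed.

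The one genuinely delicate point — and the step I expect to be the main obstacle — is justifying that the bounded-error Grover variant really produces a \emph{two-sided} tester whose error probability is bounded by a constant strictly below $1/2$, uniformly over all subintervals and all noise realizations of $\tilde f$. One has to be slightly careful that the error parameter of the interval tester does not degrade with the interval length or with the (unknown) number of marked elements inside it; the cleanest way is to cite \cite{PMR03} for a fixed-confidence bounded-error search whose query cost is $O(\sqrt m)$ regardless of the number of solutions, and then, if needed, boost the tester a constant number of times and take a majority vote to push its error below any desired constant. Once $\tilde g$ has a fixed constant error $p'<1/2$, Proposition~\ref{lm:bin-search} does the rest: its internal depth-dependent repetition $d$ already handles the accumulation of errors across the $O(\log m)$ levels of the search, so no further analysis of the binary search itself is required. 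I would therefore keep the proof short: construct $\tilde g$, verify the two hypotheses (two-sided approximation with constant error; $T(m/k)=O(T(m)/k^{1/2})$), and invoke Proposition~\ref{lm:bin-search}.
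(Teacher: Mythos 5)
Your proposal is correct and is essentially the paper's own argument: the paper likewise constructs the interval tester $\tilde{g}$ from the bounded-error Grover variant of \cite{PMR03} (giving $T(m)=O(\sqrt{m})$ with constant two-sided error) and then invokes Proposition~\ref{lm:bin-search}. Your extra care about boosting the tester's error to a fixed constant below $1/2$ is a sensible elaboration of a point the paper leaves implicit, but it does not change the route.
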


In quantum computing, $f$ rarely has
two-sided errors. For instance, Grover's search has a one-sided error only.
If we assume that $\tilde{f}$ has one-sided error only, we can obtain
a slightly better version of \Cref{prop:the-first-one-base-two-sided}.
Formally, we assume that 
$\tilde{f}(x)=f(x)$ with probability at least $1-p$ and $0$ otherwise.
%

\begin{proposition}\label{prop:find-first}
    There exists a quantum algorithm that finds the first marked element
    in an array of size $m$ in expected $O(\sqrt{x})$ queries and with error probability less than $0.5$, where $x$ is the position of the first marked element, or
    $O(\sqrt{m})$ queries if none is marked. Furthermore, it works
    even when the oracle access to the array has one-sided error.
    Additionally, it has a worst-case query complexity of $O(\sqrt{m})$ in all cases. 
\end{proposition}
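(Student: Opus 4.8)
The plan is to wrap the two‑sided algorithm $\textsc{FindFirst}$ of \Cref{prop:the-first-one-base-two-sided} in an exponential‑search (doubling) loop. Concretely: for $j=0,1,2,\dots,\lceil\log_2 m\rceil$ run a boosted version of $\textsc{FindFirst}$ on the prefix $[1,\min(2^{j},m)]$; the first stage that returns a value other than ``none'' we output that value and stop, and if every stage returns ``none'' we output $m+1$. By the proof of \Cref{lm:bin-search} the error probability of $\textsc{FindFirst}$ can be pushed below any desired constant $\varepsilon$ at the price of a constant factor in the number of queries, so fix $\varepsilon$ small enough (say $\varepsilon=1/8$); each stage then costs $O(\sqrt{2^{j}})$ queries.

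First I would establish the structural fact that makes the doubling sound: in the \emph{one-sided} error model, $\textsc{FindFirst}$ run on a prefix containing no marked element returns ``none'' \emph{with certainty}. Indeed, $\tilde f$ never reports an unmarked element as marked, so on a marked‑free region the Grover subroutine underlying $\tilde g$ faces an all‑zero oracle and, after a final verification query (which we may always append), reports ``not found'' deterministically; hence every comparison $\tilde g(\ell,r)$ inside \Cref{alg:bin_search_two_sided} returns $0$ and the search slides all the way to the right, returning $r+1$. Let $x$ be the position of the first marked element ($x=m+1$ if none) and put $j^{*}=\lceil\log_2 x\rceil$. Then every stage $j<j^{*}$ operates on a marked‑free prefix, returns ``none'' deterministically, and the geometric sum of their costs is $\sum_{j<j^{*}}O(\sqrt{2^{j}})=O(\sqrt{2^{j^{*}}})=O(\sqrt x)$; when no element is marked this already gives the claimed $O(\sqrt m)$ bound together with a deterministically correct output of $m+1$.

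Next I would analyse the stages $j\ge j^{*}$. For such $j$ the prefix $[1,\min(2^{j},m)]$ contains $x$ as its first marked element, so the boosted $\textsc{FindFirst}$ returns exactly $x$ with probability $\ge 1-\varepsilon$, and otherwise returns either ``none'' or some wrong marked position, of total probability $\le\varepsilon$; reaching stage $j^{*}+t$ requires $t$ consecutive ``none'' answers, an event of probability $\le\varepsilon^{t}$. Summing, the algorithm outputs a value different from $x$ with probability at most $\sum_{t\ge 0}\varepsilon^{t}\cdot\varepsilon+\varepsilon^{\,\lceil\log_2 m\rceil-j^{*}+1}\le \frac{2\varepsilon}{1-\varepsilon}<\tfrac12$, which proves correctness. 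For the expected running time, stage $j^{*}$ costs $O(\sqrt{2^{j^{*}}})=O(\sqrt x)$ since $2^{j^{*}}<2x$, and the expected extra cost of the later stages is $\sum_{t\ge 1}\varepsilon^{t}\cdot O(\sqrt{2^{\,j^{*}+t}})=O(\sqrt x)\sum_{t\ge 1}(\varepsilon\sqrt 2)^{t}=O(\sqrt x)$ because $\varepsilon\sqrt 2<1$; adding the deterministic $O(\sqrt x)$ from the early stages gives expected $O(\sqrt x)$ overall, and $O(\sqrt m)$ when nothing is marked. The worst‑case bound is immediate: the loop is capped at $j=\lceil\log_2 m\rceil$, so in \emph{every} execution the total number of queries is at most $\sum_{j=0}^{\lceil\log_2 m\rceil}O(\sqrt{2^{j}})=O(\sqrt m)$.

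I expect the only real obstacle to be the first step: carefully justifying that the bounded‑error Grover routine of \cite{PMR03} can be made to answer ``not found'' with certainty on marked‑free regions in the one‑sided model (i.e.\ isolating precisely the one‑sided guarantee we invoke and, if necessary, adding our own verification query), since the whole doubling argument collapses if any of the $\Theta(\log x)$ stages before $j^{*}$ has even a small constant probability of a false positive. Everything after that point is the routine geometric‑series bookkeeping outlined above.
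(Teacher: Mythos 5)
Your proof is correct and rests on the same two pillars as the paper's own argument: (i) with a one-sided-error oracle, a Grover-based test on a marked-free prefix reports ``not found'' with certainty (after a verification query), so the doubling loop can never terminate before the prefix contains $x$; and (ii) a geometric sum over the doubling scales gives expected cost $O(\sqrt{x})$ and worst-case cost $O(\sqrt{m})$. The one structural difference is which subroutine you run at each scale. The paper's \Cref{alg:first-one} uses the cheap detector $\textsc{GroverTwoSided}(1,r,\tilde f)$ inside the loop and invokes the full $\textsc{FindFirst}$ only once, on the final interval $[1,R]$; since $R\geq x$ holds with certainty, correctness reduces to the success of that single call and no per-stage boosting is needed. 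You instead run a boosted $\textsc{FindFirst}$ at every scale, which forces you to union-bound the failure probability over all reachable stages and hence to drive each stage's error below a constant such as $1/8$ (and below $1/\sqrt{2}$ for the expected-time series to converge) --- your accounting of both sums is correct. The trade-off is purely in constants: you pay a boosting factor on the $O(\sqrt{2^{j}})$ work at each stage, whereas the paper pays one unboosted Grover per stage plus one final search. The ``obstacle'' you flag at the end --- that the underlying bounded-error Grover must never produce a false positive on an empty region in the one-sided model --- is exactly the property the paper also invokes (implicitly) for $\textsc{GroverTwoSided}$, and your proposed fix of appending a classical verification query is the standard and adequate justification.
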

\begin{proof}
Let $\textsc{FindFirst}$ denote the algorithm from
\Cref{prop:the-first-one-base-two-sided} and $\textsc{GroverTwoSided}$
denote the variant of Grover's algorithm of \cite{PMR03} that works with
two-sided error oracles. Recall that we assume that $\tilde{f}$ has
one-sided error, \emph{i.e.} it may return $0$ instead of $1$ with
small probability but not the other way around. Consider the following
algorithm:

\begin{algorithm}[H]
    \caption{$\textsc{FindFirstAdvanced}(m,f)$. Find the first marked element in an array.\label{alg:first-one}}
    \begin{algorithmic}
        \State $r\gets 1$\Comment{size of the search space}
        \While{$r\leq n$ and $\textsc{GroverTwoSided}(1,r,\tilde{f})=0$}
            \State $r\gets \min(m,2r)$
        \EndWhile 
        \State \Return{$\textsc{FindFirst}(r,\tilde{f})$}
    \end{algorithmic}
\end{algorithm}

We now show that this algorithm satisfies the requirements of
\Cref{prop:find-first}. To simplify the proof, we assume that the array
always contains a marked element; this is without loss of generality
because we can add an extra object at the end that is always marked.
Furthermore, we assume that $n$ is a power 2, this is again without loss
of generality because we can add dummy object at the end at the cost of doubling
the array size at most.

Recall that $\tilde{f}$ has a one-sided error,
and the same applies to $\textsc{GroverTwoSided}$ in this case. Therefore
the test $\textsc{GroverTwoSided}(1,r,\tilde{f})=0$ can only fail if
there actually is a marked element in the interval $[1,r]$. Of course, the problem is that it can succeed even though there is a marked element
in this interval. Let $p$ be the probability that this happens (\emph{i.e.}
$\textsc{GroverTwoSided}$ fails), we know that this is $<1/2$ by
\cite[Theorem~10]{PMR03}. 
Let $x$ be the position of the first marked element and let $\ell_x$ be
such that $2^{\ell_x}\leqslant x<2^{\ell_x+1}$.
Let $R$ be the value of $r$ after the loop, it is a random variable and always a power of $2$.
By the above reasoning, it is always the case
that $R\geqslant x$. Furthermore, for any $\ell_x\leqslant\ell<\log_2n$,
the probability that $R=2^\ell$ is at most $p^{\ell-\ell_x}(1-p)$.
The call to $\textsc{FindFirst}$ takes time $O(\sqrt{R})$ by \Cref{prop:the-first-one-base-two-sided}. Hence, the expected time complexity
of this algorithm is
\begin{align*}
    O\left(\sum_{\ell=\ell_x}^{\log n}p^{\ell-\ell_x}(1-p)\sqrt{2^\ell}\right)
    &=O\left(\sqrt{2^{\ell_x}}\sum_{\ell=0}^{\infty}p^{\ell}\sqrt{2^\ell}\right)\\
    &=O\left(\sqrt{2^{\ell_x}}\frac{1}{1-\sqrt{2}p}\right)\\
    &=O\left(\sqrt{x}\right)
\end{align*}
where we assume that $p$ is small enough. This is always possible by repeating
the calls to $\textsc{FindFirst}$ a constant number of times to reduce the
failure probability $p$. Finally, we note that the only way this algorithm
can fail is if the (unique) call to $\textsc{FindFirst}$ fails and this only
happen with constant probability.
%
 
 
\end{proof}

\section{A Fast Quantum Implementation of  Online Algorithm for $k$-server Problem on Trees}\label{sec:q-algorithm}

We consider a special way of storing a rooted tree. Assume that for each vertex $v$ we have access to a sequence $a^v=(a^v_1,\dots,a^v_d)$ for $d=\dist(1,v)+1$. Here $a^v$ is a path from the root (the vertex $1$) to the vertex $v$, $a^v_1=1, a^v_d=v$. Such a way of describing a tree is not uncommon, for example 
when the tree represents a file system. A file path
``\texttt{c:\textbackslash Users\textbackslash MyUser\textbackslash newdoc.txt}'' is exactly
such a path in the file system tree.
Here ``\texttt{c}'', ``\texttt{Users}'', ``\texttt{MyUser}'' are ancestors of ``\texttt{newdoc.txt}'', ``\texttt{c}'' is the root and  ``\texttt{newdoc.txt}'' is the node
itself. 

We assume that we have access to the following two oracles in time $O(1)$:
(i) given a vertex $u$, a (classical) oracle that returns the length of the string path $a^u$; (ii) given a vertex $u$ and an index $i$, a quantum oracle that returns the $i^{th}$ vertex $a_i^u$ of the sequence $a^u$. 

We can solve the $k$-server problem on trees using the same algorithm
as in Section \ref{sec:c-algorithm} with the following modifications:
(i) The function $\LCA(u,v)$ becomes $\textsc{LCP}(a^u,a^v)$ where $\textsc{LCP}(a^u,a^v)$ is a longest common prefix of two sequences $a^u$ and $a^v$. (ii) $\textsc{MoveUp}(v,z)$ is the vertex $a^v_{d-z}$ where $a^v$ is the sequence for $v$ and $d=|a^v|$; (iii) We can compute $\dist(u,v)$ if $u$ is the ancestor of $v$: it is $d'-d''$, where $d'$ is a length of $a^u$ and $d''$ is a length of $a^v$. Note that the invocations of $\dist$ in Algorithms \ref{alg:move} and \ref{alg:query} are always this form. The only exception is $\dist(u,v)$ in $\textsc{Sort}$ in which the function uses $\textsc{LCA}$ as a subroutine. The complexity of $\textsc{Sort}$ is thus the same as the complexity of $\textsc{LCA}$ or $\textsc{LCP}$ in our case.

By doing so, we do not need any preprocessing. We now replace the $\textsc{LCP}(a^u,a^v)$ function by a quantum
subroutine $\textsc{QLCP}(a^u,a^v)$, presented in \Cref{sec:qlcp},
and keep everything else as is. This subroutine runs in time
$O(\sqrt{n}\log n)$ with $O\left(\frac{1}{n^3}\right)$ error probability.
This allows us to obtain the following result.

\begin{theorem}\label{th:quantum-compl}
    There is a quantum algorithm that processes a request in time
    $O\left(k^2\sqrt{n}\log n\right)$ with probability of
    error $O\left(\frac{1}{n}\right)$. It
    does not require any preprocessing.
\end{theorem}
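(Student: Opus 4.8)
The plan is to take the classical request-processing algorithm of Section~\ref{sec:c-algorithm} with the modifications already described---replacing $\textsc{LCA}$ by $\textsc{LCP}$ and $\textsc{MoveUp}$ by the appropriate array lookup---and simply track how the cost and error of each primitive change when $\textsc{LCP}$ is implemented by the quantum subroutine $\textsc{QLCP}$. By \Cref{th:query-proc}, the classical request processing performs $O(k^2)$ calls to $\textsc{LCA}$ (inside $\textsc{DistanceToInactive}$, which is called $k$ times and does $O(k)$ LCA computations each, plus the $O(k\log k)$ sort, which by the remark in Section~\ref{sec:q-algorithm} costs the same as the $\textsc{LCP}$ calls it invokes) together with $O(k)$ calls to $\textsc{Move}$, each of which now costs $O(1)$ array lookups instead of $O(\log n)$. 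Since each $\textsc{QLCP}$ call runs in time $O(\sqrt{n}\log n)$, and there are $O(k^2)$ of them, the total time is $O(k^2\sqrt{n}\log n)$, which dominates the $O(k)$ cheap $\textsc{Move}$ operations and the $O(k\log k)$ comparisons of the sort.

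First I would state precisely what $\textsc{QLCP}$ provides---running time $O(\sqrt{n}\log n)$ and error probability $O(1/n^3)$ per call, as announced---and note that all other steps (the case analysis of Section~\ref{sec:distance}, which only needs ancestor tests of the form ``$\textsc{LCP}(a^u,a^v)$ equals $a^u$,'' the distance computations of the form $d'-d''$, and $\textsc{Move}$) are now deterministic and take $O(1)$ time given the results of the $\textsc{LCP}$ calls. Then I would count: at most $O(k^2)$ invocations of $\textsc{QLCP}$ over the whole request, so by a union bound the probability that \emph{any} of them errs is at most $O(k^2)\cdot O(1/n^3) = O(1/n)$ (using $k\le n$), and conditioned on all $\textsc{QLCP}$ calls being correct, the algorithm behaves exactly as the correct classical algorithm of Section~\ref{sec:c-algorithm}, whose correctness rests on \Cref{lm:inactive}. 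The time bound is the sum $O(k^2)\cdot O(\sqrt{n}\log n) + O(k)\cdot O(1) + O(k\log k) = O(k^2\sqrt{n}\log n)$.

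The main obstacle is making sure the accounting of $\textsc{LCP}$ calls is honest: one must check that every place where the classical algorithm used $\textsc{LCA}$, ancestor testing, or inter-node distance is either replaced by a bounded number of $\textsc{QLCP}$ calls or becomes free given those calls. In particular the $\textsc{Sort}$ step hides $O(k\log k)$ comparisons, each of which needs an $\textsc{LCP}$; naively this is $O(k\log k)$ quantum subroutine calls, which is still $O(k^2)$ and hence absorbed, but it is worth remarking that one could instead compute the $k$ values $\textsc{QLCP}(a^{v_i},a^q)$ and the distances $\dist(v_i,q)$ once up front with $O(k)$ calls, then sort classically---this is cleaner and keeps the count at $O(k^2)$ comfortably. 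The remaining routine check is that $\textsc{Move}$ under the string-path representation really is $O(1)$: $\textsc{MoveUp}(v,z)$ is the single lookup $a^v_{|a^v|-z}$, and $\textsc{Move}$ needs one $\textsc{LCP}$ (already counted) plus a constant number of such lookups and subtractions. Assembling these observations gives the theorem.
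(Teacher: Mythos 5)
Your proposal is correct and follows essentially the same route as the paper's proof: count the $O(k^2)$ invocations of $\textsc{QLCP}$ (from the sort, the $O(k)$ distance-to-inactive computations each doing $O(k)$ LCA/LCP tests, and the moves), multiply by the $O(\sqrt{n}\log n)$ cost per call, and bound the failure probability by $O(k^2)\cdot O(1/n^3)=O(1/n)$ using $k\le n$ --- the paper phrases this last step as $(1-1/n^3)^{4k^2}\ge 1-\Omega(1/n)$ rather than a union bound, which is equivalent. Your extra remark about precomputing the $k$ values $\textsc{QLCP}(a^{v_i},a^q)$ before sorting is a harmless refinement that the paper does not make.
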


\begin{proof}
 The complexity $\textsc{Move}$ is the complexity of $\textsc{LCA}$
 that is $\textsc{QLCP}$ in our implementation, plus the complexity of $\textsc{MoveUp}$. The former has complexity is $O(\sqrt{n}\log n)$ by Lemma \ref{lm:lcp}, and the latter
 $O(1)$ by the oracle. Therefore, the total running time of $\textsc{Move}$ is $O(\sqrt{n}\log n)$.
 
 The complexity of $\textsc{Request}$ is $O(k^2)$ times the cost of $\textsc{LCA}$ that is $\textsc{QLCP}$ in our implementation, and then a call to $\textsc{Move}$.  Additionally, the $\textsc{Sort}$ function invokes $\textsc{LCA}$ to compute distances. Hence, the complexity of $\textsc{Sort}$ is $O\left(k\log k\cdot \sqrt{n}\cdot\log n\right)$, and the total complexity is $O\left(k^2\sqrt{n}\cdot\log n\right)$.
 
 We invoke, $\textsc{QLCP}$ at most $4k^2$ times so the success probability is at least $\left(1-\frac{1}{n^3}\right)^{4k^2}\geq \left(1-\frac{1}{n^3}\right)^{4n^2}\geq 1-\Omega\left(\frac{1}{n}\right)$ for enough big $n$. Therefore, the error probability is $O\left(\frac{1}{n}\right)$.
 Note that we do not need any preprocessing.
\end{proof}

\subsection{Quantum Algorithm for Longest Common Prefix of Two Sequences}\label{sec:qlcp}

Let us consider the Longest Common Prefix (LCP) problem. Given two sequences $(q_1,\dots,q_d)$ and $(b_1,\dots,b_s)$, the problem is to find $t$ such that $q_1=b_1,\dots,q_t=b_t$ and $q_i\neq b_i$ for $t+1\leq i \leq m$, where $m=min(d,s)$.
Consider a function $f:\{1,\dots,m\}\to\{0,1\}$ such that $f(i)=1$ iff $q_i\neq b_i$. Assume that $x$ is the minimal argument such that $f(x)=1$, then $t=x-1$. The LCP problem is equivalent to finding the first marked element. It can be solved as follows.

\begin{algorithm}[h]
    \caption{$\textsc{QLCP}(q,b)$. Quantum algorithm for the longest common prefix.
    \label{alg:lcp}}
    \begin{algorithmic}
      \State $m\gets \min(d,s)$
      \State $x\gets \textsc{FindFirst}(m,f)$ \Comment{Repeat $3\log m$ times and take the majority vote}
      \If{$x=NULL$}
        \State $x\gets m+1$
      \EndIf
      \State \Return{$x-1$}
    \end{algorithmic}
\end{algorithm}

\begin{lemma}\label{lm:lcp}
Algorithm \ref{alg:lcp} finds the LCP of two sequences of length $m$ in time $O(\sqrt{m}\log m)$ and with probability of error $O\left(\frac{1}{m^3}\right)$.
\end{lemma}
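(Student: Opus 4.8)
The plan is to prove Lemma~\ref{lm:lcp} by reducing the LCP problem to the ``find the first marked element'' problem and then invoking the one-sided error variant from \Cref{prop:find-first}, with an extra majority-vote boosting layer to drive the error probability down to $O(1/m^3)$.

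\begin{proof}[Proof sketch]
First I would argue correctness of the reduction. Define $f:\{1,\dots,m\}\to\{0,1\}$ by $f(i)=1$ iff $q_i\neq b_i$, where $m=\min(d,s)$. By definition of the longest common prefix, if $x$ is the minimal index with $f(x)=1$ then $q_1=b_1,\dots,q_{x-1}=b_{x-1}$ and $q_x\neq b_x$, so the LCP length is exactly $t=x-1$; if no such $x$ exists then all $m$ compared positions agree and $t=m$, which is what the algorithm returns when $\textsc{FindFirst}$ reports $NULL$ (i.e. $x\gets m+1$, so $x-1=m$). The oracle for $f$ is implemented using the two input oracles: querying $a^u_i$ and $a^v_i$ and comparing them costs $O(1)$, but since $a^v_i$ is returned by a \emph{quantum} oracle we must treat $f$ as having a (one-sided, if we are careful, or at worst two-sided) error; this is exactly the setting handled by \Cref{prop:find-first} (resp.\ \Cref{prop:the-first-one-base-two-sided}).

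Next I would bound the running time and error of a single call. A single invocation of $\textsc{FindFirst}(m,f)$ (the algorithm of \Cref{prop:the-first-one-base-two-sided}) runs in $O(\sqrt{m})$ queries to $f$, each query costing $O(1)$, hence $O(\sqrt{m})$ time, and succeeds with probability at least some constant $> 1/2$. To boost this to error $O(1/m^3)$, we repeat the whole call $3\log m$ times (as indicated in the comment of Algorithm~\ref{alg:lcp}) and take the majority vote on the returned value. By a standard Chernoff/Hoeffding bound, running a procedure with constant success probability $p_0>1/2$ a total of $c\log m$ times and taking the majority fails with probability at most $m^{-\Omega(c)}$; choosing the constant $3$ large enough (or absorbing it into $c$) gives failure probability $O(1/m^3)$. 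The total time is $O(\sqrt{m}\log m)$, matching the claim.

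The main obstacle, and the point that needs the most care, is the nature of the error in the oracle for $f$. Since $a^v_i$ comes from a quantum oracle with perfect behaviour by assumption (``in time $O(1)$''), $f$ is actually exact and we could even use the simpler \Cref{lm:the-first-one-base}; but the lemma is stated so as to remain valid even if the comparison subroutine itself is realized by a Grover-type test with one-sided error, in which case we must invoke \Cref{prop:find-first} (one-sided) or \Cref{prop:the-first-one-base-two-sided} (two-sided). In all cases the worst-case query complexity is $O(\sqrt{m})$, so the time bound $O(\sqrt m \log m)$ after $O(\log m)$ repetitions is unaffected; only the constant inside the $\Omega$ of the error exponent changes. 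One should also note that when $d\neq s$ the ``missing'' positions beyond $m$ play no role: the LCP cannot exceed $\min(d,s)$, so restricting $f$ to $\{1,\dots,m\}$ is correct. Assembling these pieces gives the stated time complexity $O(\sqrt{m}\log m)$ and error probability $O(1/m^3)$.
\end{proof}
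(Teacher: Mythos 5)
Your proposal is correct and follows essentially the same route as the paper: reduce LCP to finding the first marked element of $f(i)=[q_i\neq b_i]$, call the $O(\sqrt{m})$-query \textsc{FindFirst} routine, and repeat $3\log m$ times with a majority vote to drive the error down to $O(1/m^3)$. In fact you supply more detail than the paper's own three-line proof, which omits the Chernoff argument for the error bound and the discussion of whether the oracle for $f$ is exact.
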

\begin{proof}
The correctness of the algorithm follows from the definition of $f$. 
The complexity of $\textsc{FindFirst}$ is $O(\sqrt{m})$ by \Cref{lm:the-first-one-base}. The total running time is $O(\sqrt{m}\log m)$ because of the repetitions.
\end{proof}

%
%
%
\bibliographystyle{splncs04}
\bibliography{biblio}
\newpage
\appendix

\section{Implementation of Binary Lifting}\label{apx:bl}
 The $\textsc{BL\_Preprocessing}()$ prepares an array $up$ that stores data for $\textsc{MoveUp}$ subroutine. For a vertex $v$ and an integer $0\leq w \leq \lfloor\log_2 n\rfloor$, the cell $up[v][w]$ stores a vertex $v'$ on the path from $v$ to the root and at distance $\dist(v,v')=2^w$.
We construct the array using  dynamic programming and obtain the following formulas:
\[up[v][w]\gets up[up[v][w-1]][w-1],\quad up[v][0] \gets \textsc{Parent}(v)\]

Let us show that the formulas are correct. Let $v'=up[v][w]$, $v''=up[v][w-1]$. Then $\dist(v',v)=\dist(v'',v)+\dist(v'',v')=2^{w-1}+2^{w-1}=2^w$.

The algorithm is presented in Algorithm \ref{alg:bl-preproc}

\begin{algorithm}[H]
    \caption{$\textsc{BL\_Preprocessing}()$ prepares the required data structures for Binary Lifting technique.\label{alg:bl-preproc}}
    \begin{algorithmic}
        \For{$v\in V$}
        \State $up[v][0]\gets\textsc{Parent}(v)$
        \EndFor
        \For{$w\in\{1,\dots \lfloor \log_2 n\rfloor\}$}
        \For{$v\in V$}
        \State $v''\gets up[v][w-1]$
        \If{$v''= NULL$}
        \State $up[v][w]\gets NULL$
        \EndIf
        \If{$v''\neq NULL$}
        \State $up[v][w]\gets up[v''][w-1]$
        \EndIf
        \EndFor
         \EndFor
    \end{algorithmic}
\end{algorithm}

The subroutine $\textsc{MoveUp}(v,z)$ returns a vertex $v'$ on the path from $v$ to the root and at distance $dist(v',v)=z$. First, we find the maximal $w'$ such that $2^{w'}<z$. Then, we move to the vertex $up[v][w']$ and reduce $z$ by $2^{w'}$. We repeat this action until $z= 0$. The total number of steps is at most $O(w')=O(\log n)$.

\begin{algorithm}[H]
    \caption{$\textsc{MoveUp}(v,z)$ returns the vertex $v'$ at distance $z$ on the path to the root.\label{alg:bl-moveup}}
    \begin{algorithmic}
        \State $w\gets 0$
        \State $power2\gets 1$
        \While{$power2\cdot 2 \leq z$}
        \State $w\gets w+1$
        \State $power2\gets power2 \cdot 2$
        \EndWhile
       \While{$z\neq 0$}
       \State $v''\gets v$
             \While{$power2>z$}
        \State $w\gets w-1$
        \State $power2\gets power2 / 2$
        \EndWhile
            \State $v\gets up[v][w]$
            \State $z\gets z-dist(v'',v)$
       \EndWhile
       \State \Return $v$
    \end{algorithmic}
\end{algorithm}

\section{Implementation of $\textsc{ComputeDistance}(u)$ Subroutine for Distance Computing}\label{apx:alg:dist}
\begin{algorithm}[ht]
    \caption{$\textsc{ComputeDistance}(u)$. Recursively compute the distance from the root.} \label{alg:dist}
    \begin{algorithmic}
        \For{$v \in \textsc{Children}(u)$}
        \State $\dist(1,v)\gets \dist(1,u) + 1$
        \State $\textsc{ComputeDistance}(v)$ 
        \EndFor
    \end{algorithmic}
\end{algorithm}

\end{document}